\newtheorem{theorem}{Theorem}
\newtheorem{definition}{Definition}
\newtheorem{remark}{Remark}
\newcommand{\R}{\mathbb{R}}
\begin{document}
\begin{frontmatter}

\title{A Physics-Informed Scenario Approach with Data Mitigation for Safety Verification of Nonlinear Systems
}
\author[First]{Ali Aminzadeh} 
\author[Second]{MohammadHossein Ashoori} 
\author[Second]{Amy Nejati}
\author[Second]{Abolfazl Lavaei}

\address[First]{Faculty of Engineering and Natural Sciences, 
   Tampere University, Finland 
   (e-mail: ali.aminzadeh@tuni.fi).}

\address[Second]{School of Computing, 
   Newcastle University, United Kingdom 
   (e-mail: m.ashoori2@newcastle.ac.uk, amy.nejati@newcastle.ac.uk, abolfazl.lavaei@newcastle.ac.uk).}

\begin{abstract}
This paper develops a \emph{physics-informed} scenario approach for safety verification of nonlinear systems using barrier certificates (BCs) to ensure that system trajectories remain within safe regions over an infinite time horizon. Designing BCs often relies on an accurate dynamics model; however, such models are often imprecise due to the model complexity involved, particularly when dealing with highly nonlinear systems. In such cases, while scenario approaches effectively address the safety problem using collected data to construct a guaranteed BC for the unknown dynamical system, they often require solving an optimization problem with substantial amounts of data. To address this, we propose a {physics-informed} scenario approach that selects data samples such that the outputs of the physics-based model and the observed data are sufficiently close. This approach guides the scenario optimization process to eliminate redundant samples and potentially reduce the required dataset size. We validate our approach through three case studies, showcasing its practical application in reducing the required data.
\end{abstract}

\begin{keyword}
Safety verification, physics-informed scenario approach, barrier certificate, data-driven optimization, formal methods
\end{keyword}

\end{frontmatter}

\section{Introduction}
	Safety-critical systems are integral to the functioning of modern society, playing a pivotal role in a wide range of applications, including autonomous vehicles, aerospace, transportation systems, and healthcare. Ensuring the safety of these systems through {formal verification} is crucial, as their failure can result in significant consequences, such as loss of life or substantial financial losses~\citep{mcgregor2017analysis}. 
	In recent years, significant interest has emerged in verifying the safety of dynamical systems using {barrier certificates (BCs)}, a concept initially introduced by~\citet{prajna2004safety}. Specifically, BCs operate by enforcing a series of inequalities on the system's state and dynamics, analogous to a Lyapunov function. Through designing an appropriate level set, BCs ensure that system trajectories remain within safe regions, starting from a given initial set, thereby offering a formal safety guarantee. Barrier certificates have widely been used in the relevant literature for ensuring system safety and synthesizing controllers across various complex systems \citep[see \emph{e.g.,}][]{wieland2007constructive, ames2019control, zaker2024compositional, lavaeiTAC2024,nejati2024context,lavaei2022compositional}.

While effective, BCs typically require {accurate system models}. However, in practice especially for nonlinear systems, such models are often imprecise, creating a critical gap that limits formal safety guarantees when the dynamics are not fully reliable. To address this challenge, increasing attention has turned to the {scenario approach} \citep{calafiore2006scenario,campi2009scenario,margellos2014road}, which relies on sampled data rather than exact models. This data-driven framework has been successfully applied to safety verification and controller synthesis in recent years \citep[see \emph{e.g.,}][]{makdesi2021efficient,coppola2022data,nejati2023formal,aminzadeh2024compositional,banse2024data,samari2026data,zaker2025data}.
	
	While the scenario approach holds great promise for providing formal safety guarantees in dynamical systems, it relies on data being {independent and identically distributed (i.i.d.)}. Consequently, only a single input-output data pair can be extracted from each trajectory~\citep{calafiore2006scenario}, necessitating the collection of \emph{multiple independent trajectories}. 
	This requirement limits the applicability of the scenario approach to systems that are resource-constrained or challenging to simulate extensively, as generating a large number of independent trajectories may be prohibitively expensive.
	
	\textbf{Contributions.} Motivated by this central challenge, this work introduces a \emph{physics-informed} scenario approach to ensure system safety while reducing the required amount of data. In particular, while the physics-based model is not entirely precise, it contains valuable insights derived from fundamental physical laws and can effectively reduce the data requirements for solving a scenario optimization program. Our approach focuses on those data samples where the outputs of the physics-based model and the observed data are sufficiently close, within a predefined threshold, demonstrating the similarity in the behavior of these two systems. This enables the physics-based model to guide the scenario optimization process, eliminating redundant data and thus potentially reducing the dataset size required for safety analysis.
	
	\textbf{Related literature on physics-informed techniques.} Physics-informed data-driven methods have recently gained significant
	interest in the realm of formal verification and control synthesis. In this regard, 	\citet{zhang2022physics} introduce a physics-informed data-driven approach for identifying governing equations from data and solving them to derive spatio-temporal responses. Additionally, \citet{niknejad2023physics} propose a physics-informed data-driven approach for control design in discrete-time linear systems. In the realm of physics-informed neural networks, \citet{huang2022applications} study their role in addressing challenges faced by deep learning applications in power systems, including issues related to data quality, infeasible solutions, and limited generalizability. Recently, \citet{liu2025formally} explore the use of physics-informed learning and formal verification for neural network-based control Lyapunov functions, which are crucial for stabilizing nonlinear systems.  While these studies show promise, none have explored a physics-informed \emph{scenario approach} for safety verification, which is studied in this paper.
	
	\section{Problem Description} \label{prelim}
	\subsection{Notation}
	We use $\mathbb{R}$, $\mathbb{R}_{>0}$, $\mathbb{R}_{\geq 0}$, $\mathbb{N}_0$, and $\mathbb{N}$ to denote the sets of real numbers, positive  and non-negative real numbers, non-negative and positive integers, respectively. The notation $\mathsf d = [{\mathsf d}_1; \ldots; {\mathsf d}_n]$ is employed to represent a vector of $n$ decision variables. The Euclidean norm of ${x} \in \mathbb{R}^n$ is denoted by $\Vert{x}\Vert$.

	\subsection{Discrete-Time Nonlinear Systems}
	In this work, we focus on discrete-time nonlinear systems (dt-NS), which are characterized as 
	\begin{equation}\label{Eq_1a}
		\Lambda\!:x(k+1) = {f(x(k))}, \quad k\in\mathbb N_0,
	\end{equation}
	where $x \in {X}$ is the system's state, with ${X} \subset \mathbb R^n$ being its  state set, and $f\!\!: X \rightarrow X$ is the transition map which is assumed to be \emph{unknown} in our setting. The sequence \( x_{x_0}\!\!: \mathbb{N}_0 \rightarrow X \) that satisfies~\eqref{Eq_1a} for any initial state \( x_0 \in X \) is referred to as the {state trajectory} of \( \Lambda \), originating from the initial state \( x_0 \). We use the tuple $(X,f)$ to refer to the dt-NS in~\eqref{Eq_1a}.
	
	In the following subsection, we present the concept of barrier certificates for the dt-NS described in \eqref{Eq_1a}, which can be used to provide a safety assurance for the system.
	
	\subsection{Barrier Certificates} \label{barrier}
	
	\begin{definition} \label{BC}
		Consider a dt-NS $\Lambda = (X,f)$, with $X_0$ and $X_u$ representing its initial and unsafe sets, respectively. A function $\mathcal{B}\!: X \to \mathbb{R}$ is considered a barrier certificate (BC) for $\Lambda$ if there exist constants $\alpha, \rho \in \mathbb{R}$, with $\rho > \alpha$, and $\kappa \in (0,1]$ such that 
		\begin{subequations}
			\begin{align}\label{sys2}
				&\mathcal B(x) \leq \alpha,\quad\quad\quad\quad\quad\quad\quad\quad\quad\quad\!\forall x \in X_{0},\\\label{sys3}
				&\mathcal B(x) \geq \rho, \quad\quad\quad\quad\quad\quad\quad\quad\quad\quad\forall x \in X_{u},\\\label{BCeq}
				& \mathcal B(f(x)) \leq\kappa\mathcal B(x), \quad\quad\quad\quad\quad\quad\quad\!\!\forall x \in X.
			\end{align} 
		\end{subequations}
	\end{definition}
	
	The following theorem, borrowed from~\citep{prajna2004safety}, ensures that the system trajectories do not enter the unsafe region.
	
	\begin{theorem}\label{Kushner}
		Given a dt-NS $\Lambda = (X, f)$, suppose that $\mathcal{B}$ is a BC for $\Lambda$. Then, for any $x_0 \in X_0$ and $k \in \mathbb{N}_0$, the state trajectory $x_{x_0}(k)$ does not enter the unsafe region $X_u$, i.e., $x_{x_0}(k) \notin X_u$, over an infinite time horizon.
	\end{theorem}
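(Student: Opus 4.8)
The plan is to track the value of the barrier certificate $\mathcal{B}$ along an arbitrary trajectory and show it can never reach the threshold $\rho$ that characterizes the unsafe set. Fix an initial state $x_0 \in X_0$ and abbreviate $x(k) = x_{x_0}(k)$. By \eqref{sys2} the trajectory starts with $\mathcal{B}(x(0)) = \mathcal{B}(x_0) \le \alpha$, so the first step is simply to record this initial bound, which is what anchors the whole argument.

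The core of the proof is an induction establishing the geometric decay estimate $\mathcal{B}(x(k)) \le \kappa^{k}\,\mathcal{B}(x_0)$ for every $k \in \mathbb{N}_0$. The base case $k=0$ is immediate. For the inductive step I would use the dynamics $x(k+1) = f(x(k))$ together with the decrease inequality \eqref{BCeq} to obtain $\mathcal{B}(x(k+1)) = \mathcal{B}(f(x(k))) \le \kappa\,\mathcal{B}(x(k))$, and then multiply the induction hypothesis $\mathcal{B}(x(k)) \le \kappa^{k}\mathcal{B}(x_0)$ by $\kappa$; this last step is legitimate precisely because $\kappa > 0$, which preserves the direction of the inequality and yields $\mathcal{B}(x(k+1)) \le \kappa^{k+1}\mathcal{B}(x_0)$.

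With the decay estimate in hand, I would conclude using $\kappa \in (0,1]$, hence $\kappa^{k} \le 1$, that $\mathcal{B}(x(k)) \le \mathcal{B}(x_0) \le \alpha$ for all $k$. Since $\rho > \alpha$, this gives $\mathcal{B}(x(k)) < \rho$ for every $k \in \mathbb{N}_0$. Finally, because every unsafe state satisfies $\mathcal{B}(x) \ge \rho$ by \eqref{sys3}, the strict inequality $\mathcal{B}(x(k)) < \rho$ forces $x(k) \notin X_u$ for all $k \in \mathbb{N}_0$, which is exactly the claim. Reading the chain $\mathcal{B}(x(k)) < \rho \Rightarrow x(k) \notin X_u$ contrapositively (if $x(k) \in X_u$ then $\mathcal{B}(x(k)) \ge \rho$) makes the logic transparent.

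The main obstacle is the sign bookkeeping in passing from $\kappa^{k}\mathcal{B}(x_0)$ to the bound $\alpha$: multiplying by $\kappa \le 1$ only shrinks \emph{non-negative} quantities, so a naive application of $\kappa^{k}\mathcal{B}(x_0) \le \mathcal{B}(x_0)$ requires $\mathcal{B}(x_0) \ge 0$. I would resolve this by splitting on the sign of $\mathcal{B}(x_0)$ — for $\mathcal{B}(x_0) \ge 0$ the factor $\kappa^{k} \le 1$ gives the bound directly, while for $\mathcal{B}(x_0) < 0$ one checks that the whole trajectory keeps $\mathcal{B}$ negative, so that $\mathcal{B}(x(k)) \le \max\{\mathcal{B}(x_0),0\} \le \alpha$ — or equivalently by invoking the standard convention that the barrier level $\alpha$ is taken non-negative. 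Everything else reduces to the elementary induction above and the monotonicity of multiplication by a positive scalar.
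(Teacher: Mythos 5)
Your core induction is the right one and is correct as far as it goes: from \eqref{BCeq} and $\kappa>0$ you get $\mathcal{B}(x_{x_0}(k)) \leq \kappa^{k}\mathcal{B}(x_0)$ for all $k\in\mathbb{N}_0$, and when $\mathcal{B}(x_0)\geq 0$ this yields $\mathcal{B}(x_{x_0}(k)) \leq \mathcal{B}(x_0) \leq \alpha < \rho$, so \eqref{sys3} excludes the trajectory from $X_u$. Note that the paper itself offers no proof to compare against (the theorem is borrowed from \cite{prajna2004safety}), so the only question is whether your argument closes on its own terms.

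It does not quite, and the problem is exactly the branch you flagged. Your resolution for $\mathcal{B}(x_0)<0$, namely $\mathcal{B}(x_{x_0}(k)) \leq \max\{\mathcal{B}(x_0),0\} \leq \alpha$, still uses $0\leq\alpha$ in its second inequality, so the case split does not avoid the non-negativity convention --- it relies on it. And Definition~\ref{BC} does not grant it: $\alpha,\rho$ are arbitrary reals with only $\rho>\alpha$ required. Under that literal definition with $\kappa<1$ the theorem is in fact \emph{false}, so no rewording can repair the step. Concretely, take $X=\{0,1\}\subset\mathbb{R}$, $X_0=\{0\}$, $X_u=\{1\}$, $f\equiv 1$, $\mathcal{B}(0)=-10$, $\mathcal{B}(1)=-5$, $\alpha=-10$, $\rho=-5$, $\kappa=\tfrac{1}{2}$: all of \eqref{sys2}--\eqref{BCeq} hold (in particular $-5\leq\tfrac{1}{2}(-10)$ and $-5\leq\tfrac{1}{2}(-5)$), yet the trajectory from $0$ enters $X_u$ in one step. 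So the missing hypothesis is genuine: you need either $\kappa=1$ (then $\mathcal{B}(x_{x_0}(k))\leq\mathcal{B}(x_0)\leq\alpha<\rho$ with no sign analysis at all), or a sign convention, and the weakest convenient one is $\rho\geq 0$ rather than $\alpha\geq 0$ --- in your negative branch $\mathcal{B}(x_{x_0}(k))\leq\kappa^{k}\mathcal{B}(x_0)<0\leq\rho$ already excludes $X_u$. Your instinct about where the difficulty lies was exactly right; the fix is to state $\rho\geq 0$ (or $\alpha\geq 0$) as an explicit additional hypothesis --- one the paper's definition omits, although its numerical case studies satisfy it, since there $\alpha^*,\rho^*>0$ --- rather than to present it as something the case split or a ``standard convention'' can supply.
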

	
	Ensuring the safety of the dt-NS in \eqref{Eq_1a}, as outlined in Theorem~\ref{Kushner}, relies on precise knowledge of $f$ to verify condition \eqref{BCeq}. However, since this information is unavailable in our context, in the next section we briefly present an existing solution based on the scenario approach~\citep{nejati2023formal}, which facilitates the design of BC using finite data derived from observed system trajectories.
	
	\section{Scenario Approach for BC Design}\label{data-driven}
	
	Consider the structure of the BC as  $\mathcal{B}(q, x) = \sum_{j=1}^{z} q^j {l}^j(x)$, where ${l}^j$ represent user-defined (potentially nonlinear) basis functions, and $q = [q^1; \dots; q^{z}] \in \mathbb{R}^{z}$ are the unknown coefficients. Given that $\mathcal{B}(q, x)$ is our selected choice, we assume its basis functions are chosen such that $\mathcal{B}(q, x)$ is continuously differentiable.
	
	To ensure that conditions~\eqref{sys2}-\eqref{BCeq} are fulfilled, the safety problem can be formulated as a robust optimization program (ROP)~\citep{nejati2023formal}:
	\begin{subequations}\label{Eq_ROP}
		\begin{align}
			\min_{[\mathsf d;\eta]}\!\!\!&\quad\quad \eta &  \notag \\
			\mathrm{{\bf s.t.}}&\quad\quad\mathcal  B(q, x) - \alpha \leq \eta,\!\quad\quad\quad\quad\quad\quad~\!\forall x \in X_{0}, \label{Eq_ROP2}\\
			&\!\quad -\mathcal B(q, x) + \rho \leq \eta, \quad\quad\quad\quad\quad~~~\forall x \in X_{u}, \label{Eq_ROP3} \\
            &\quad\quad  \alpha-\rho-\zeta \leq 0, \label{Eq_ROP_LevelSet} \\
			&\quad~~~\mathcal B(q, f(x)) - \kappa\mathcal B(q, x)  \leq \eta,\quad\quad\!\!\! ~\!\forall x\in X , \label{Eq_ROP4}\\\notag
			&\quad~~~~\!\! {\mathsf d}=[\alpha; \rho; q^1;\dots; q^{z}],~\alpha, \rho, \eta, q^{j} \!\in\! \R, ~\kappa \!\in\!(0,1], \zeta\!<\!0.
		\end{align}
	\end{subequations}
	With some $\zeta<0$ in \eqref{Eq_ROP_LevelSet}, it is guaranteed that $\alpha < \rho$. The optimal value of the ROP is denoted as $\eta_{R}^*$. If $\eta_{R}^* \leq 0$, solving the ROP confirms that conditions~\eqref{sys2}-\eqref{BCeq} are satisfied. Note that the ROP in~\eqref{Eq_ROP} is convex with respect to the decision variables, owing to the structure of $\mathcal{B}(q, x) = \sum_{j=1}^{z} q^j {l}^j(x)$, with only mild bilinearity appearing between $\kappa$ and $q^z$ in \eqref{Eq_ROP4}. To resolve this, given that $\kappa$ lies between $0$ and $1$, it can be pre-selected when solving the ROP.
	
	Given that knowledge of $f(x)$ is still required in \eqref{Eq_ROP4}, and recognizing that the evolution of the dt-NS unfolds {recursively}, data $\{\hat{x}^s\}_{s=1}^S$ with $S \in \mathbb{N}$, can be collected by treating the first data point as the state and the second as the unknown map $f(x)$. This allows the ROP to be reformulated as a scenario optimization program (SOP):
	\begin{subequations}\label{Eq_SOP_ALLDATA}
		\begin{align}
			\!\!\min_{[\mathsf d;\eta]}\!\!\!&\quad \quad\eta &  \notag \\
			\mathrm{{\bf s.t.}}&\quad\quad\mathcal B(q, \hat{x}^s) \!-\! \alpha \leq \eta,\quad\quad\quad\quad\quad\quad\quad\quad~\!\!\!\!\!\!\!\!\forall \hat{x}^s \!\in\! X_{0}, \!\!\label{Eq_SOP_ALLDATA2}\\
			&\quad\! -\mathcal B(q, \hat{x}^s) \!+\! \rho \leq \eta, \quad\quad\quad\quad\quad\quad\quad~\!\forall \hat{x}^s \quad\!\!\!\!\!\!\!\in\! X_{u}, \!\!\!\label{Eq_SOP_ALLDATA3} \\
            &\quad\quad  \alpha-\rho-\zeta \leq 0, \label{Eq_SOP_LevelSet} \\
			&\quad\quad\mathcal B(q, f(\hat{x}^s)) - \kappa\mathcal B(q, \hat{x}^s) \leq  \eta,\quad\quad~~\!\!\!\forall \hat{x}^s\quad\!\!\!\!\!\!\!\in\! X,\label{Eq_SOP_ALLDATA4}\\
			&\quad\quad \forall s\in\{1,\dots,S\}, ~{\mathsf d}=[\alpha; \rho; q^1;\dots; q^{z}], \notag\\& \qquad\alpha, \rho, \eta, q^{j} \!\in\! \R, ~\kappa \!\in\!(0,1], \zeta < 0. \notag
		\end{align}
	\end{subequations}
    The optimal value of the SOP is denoted by $\eta_S^*$.
	
	In existing studies within the relevant literature \citep[\emph{e.g.,}~][]{nejati2023formal,aminzadeh2024compositional}, the SOP can be solved using finite data, with the results transferred back to the ROP while providing {out-of-sample performance guarantees}. However, the number of data required for solving the SOP is extremely high due to the inherent \emph{exponential} sample complexity of the scenario approach with respect to the size of the state space. Motivated by this challenge, we introduce in the next section a \emph{physics-informed} scenario approach that leverages the physical principles of the underlying dynamics to substantially reduce the dataset size required to solve the SOP.
	
	\section{Physics-Informed Scenario Approach}\label{physics-informed data-driven}
	We consider discrete-time nonlinear systems based on physical laws as 
	\begin{equation}\label{Eq_ph}
		\Lambda^{phy}\!:x(k+1) = {f^{phy}(x(k))}, \quad k\in\mathbb N_0.
	\end{equation}
	Such models can be obtained based on {fundamental physical laws}. For instance, \emph{electrical} circuits can be modeled using Kirchhoff's voltage and current laws, while \emph{mechanical} systems can be described through Newton's second law of motion. We assume access to data $\{\hat{x}^s\}_{s=1}^{S}$ collected from trajectories of the unknown model $f$ in \eqref{Eq_1a}, which captures the true behavior of the system, as well as a physics-informed model in \eqref{Eq_ph} based on fundamental physical laws, providing valuable information about the system. 
	
	Our proposed physics-informed framework guides the SOP in \eqref{Eq_SOP_ALLDATA} to utilize only those samples where the {physics-informed} one-step dynamics evolution (\emph{i.e.,} $f^{phy}(\hat{x}^{s})$) is sufficiently close (within a specified threshold) to the one-step evolution of the unknown model (\emph{i.e.,} $f(\hat{x}^{s})$). In particular, we construct a new dataset $\hat{X}^{p}$ from the available dataset $\{\hat{x}^s\}_{s=1}^{S}$ within $X$, by selecting only those samples that satisfy
	\begin{align}\label{closedata}	
		\hat{X}^{p} = \Big\{\hat{x}^{p} \in \{\hat{x}^s\}_{s=1}^{S} \mid \Vert f^{phy}(\hat{x}^{s}) - f(\hat{x}^{s}) \Vert \leq \delta\Big\},
	\end{align}
	where $\delta \in \mathbb{R}_{>0}$ is a sufficiently small threshold that captures the closeness between the behaviors of the two systems. This threshold can potentially guide the scenario approach to retain only those samples where the behaviors of the two systems are similar, while discarding the remaining samples as redundant. We denote the cardinality of the set $\hat{X}^{p}$ as $P \in \mathbb{N}$.

We note that the physics-based model is assumed to be informative but not fully accurate; thus, safety verification relies primarily on data, while the model is used only to guide data reduction.
	Now, instead of solving the SOP in \eqref{Eq_SOP_ALLDATA}, we propose the following {physics-informed} SOP, denoted by SOP$_{phy}$:
	\begin{subequations}\label{Eq_SOP}
		\begin{align}
			\!\!\min_{[\mathsf d;\eta]}\!\!\!&\quad\quad \eta &  \notag \\
			\mathrm{{\bf s.t.}}&\!\quad~~~\mathcal B(q, \hat{x}^p) \!-\! \alpha \!\leq\!  \eta,\quad\quad\quad\quad\quad\quad\quad~\!\!\!\forall \hat{x}^p \!~\!\!\in X_{0}, \!\!\label{Eq_SOP2}\\
			&\quad\!\! -\mathcal B(q, \hat{x}^p) \!+\! \rho \leq \eta, \quad\quad\quad\quad\quad\quad\quad~\!\!\!\!\!\forall \hat{x}^p \quad\!\!\!\!\!\!\!\in\! X_{u}, \!\!\!\label{Eq_SOP3} \\
            &\quad\quad  \alpha-\rho-\zeta \leq 0, \label{Eq_SOP_Phy_LevelSet} \\
			&\!\quad~~~\mathcal B(q, f(\hat{x}^p)) - \kappa\mathcal B(q, \hat{x}^p) \leq  \eta,\quad\!~~\forall \hat{x}^p\!\in X,\label{Eq_SOP4} \\
			&\quad~~~\! \forall p\in\{1,\dots,P\}, ~{\mathsf d}=[\alpha; \rho; q^1;\dots; q^{z}],\notag\\ &\qquad \alpha, \rho, \eta, q^{j} \!\in\! \R, ~\kappa \!\in\!(0,1], \zeta < 0. \notag
		\end{align}
	\end{subequations}
	We denote the optimal value of SOP$_{phy}$ by $\eta^*_{phy}$. 
\begin{remark}\label{remark_limitation}
A key limitation of the current approach is that the physics-informed model $f^{phy}$ and the threshold $\delta$ are used solely for sample selection in \eqref{closedata} and do not explicitly appear in the SOP constraints in \eqref{Eq_SOP}. As a result, while this strategy successfully mitigates data requirements in the case studies of Section \ref{case study}, it does not explicitly capture the effect of $f^{phy}$ in the optimization problem. We acknowledge this limitation and address it in a subsequent study, inspired by this work, by directly integrating $f^{phy}$ and $\delta$ into the scenario approach conditions to formally capture the model accuracy within the safety guarantees.
\end{remark}

In the following subsections, we present our main results for constructing BC by solving the proposed {physics-informed SOP} in \eqref{Eq_SOP} using the reduced amount of data that satisfies~\eqref{closedata}.

	\subsection{Correctness Guarantee with Less Data}\label{safety-ensurance}
	To establish our results, we first assume that $f(x)$ is Lipschitz continuous with respect to $x$, a standard assumption in the scenario approach. Given that $\mathcal{B}(q, x)$ is continuously differentiable and our analysis is conducted on the bounded domain $X$, it follows that $\mathcal{B}(q^*\!, f(x)) - \kappa^* \mathcal{B}(q^*\!, x)$ is also Lipschitz continuous with respect to $x$, with a Lipschitz constant $\mathscr{L}^2$. Similarly, using the same reasoning, it can be shown that $\mathcal{B}(q^*\!, x)$ is Lipschitz continuous with respect to $x$, with a Lipschitz constant $\mathscr{L}^1$.

 Samples $\{\hat{x}^s\}_{s=1}^{S}$ are first collected using a uniform grid size over the state space. Our physics-informed data-driven method is then applied to $\{\hat{x}^s\}_{s=1}^{S}$  using condition \eqref{closedata} to construct the physics-informed set $ \{\hat{x}^p\}_{p=1}^P$, and accordingly, guide the SOP in \eqref{Eq_SOP_ALLDATA}. This procedure ensures that the focus is on samples which, after one step of evolution, stay within a specified threshold of the output computed by the physics-based dynamics. By doing so, our approach  reduces the number of samples needed to solve the SOP, allowing us to address  SOP$_{phy}$ \eqref{Eq_SOP} instead of the more computationally demanding SOP \eqref{Eq_SOP_ALLDATA}.
	
	To achieve this, we consider physics-informed samples \(\hat{X}^{p}  = \{\hat{x}^p\}_{p=1}^P\), each associated with a ball of radius \(\epsilon^{p}\) around the sample $\hat{x}^p$, denoted as $\mathsf{X}^p$, such that \(X \subseteq \cup_{p=1}^{P} \mathsf{X}^p\), and
	\begin{align}\label{New1}
		\left \lVert x - \hat{x}^p\right \rVert \leq \epsilon^{p} , \quad\forall x \in \mathsf{X}^p, ~\forall p\in\{1,\dots,P\}.
	\end{align}
	
	The following theorem, inspired by~\cite{nejati2023data}, offers our physics-informed scenario approach with a correctness guarantee.
	
	\begin{theorem} \label{confidence}
		Given a dt-NS \(\Lambda = (X, f)\), let us solve the SOP$_{phy}$ in~\eqref{Eq_SOP} using the physics-informed dataset $\hat{X}^{p}$, which is constructed according to~\eqref{closedata} and covers the state space $X$ with associated $\epsilon^p$ satisfying~\eqref{New1}. If
		\begin{align} \label{main condition}
			&\mathscr L\epsilon^{\max}+\eta^{*}_{phy}\leq 0,
		\end{align}
		with $\mathscr L =\max\{\mathscr L^1, \mathscr L^2\}$, and $ \epsilon^{\max} = \max\{\epsilon^1,\dots, \epsilon^P\}$, then $\mathcal B$ obtained by solving the SOP$_{phy}$ in \eqref{Eq_SOP} is a BC for $\Lambda$ with a correctness guarantee.  
	\end{theorem}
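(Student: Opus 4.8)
The plan is to upgrade the finitely many sampled constraints of SOP$_{phy}$ into the three \emph{everywhere} conditions \eqref{sys2}--\eqref{BCeq} of Definition~\ref{BC}, using the ball covering \eqref{New1} together with the Lipschitz constants $\mathscr{L}^1$ and $\mathscr{L}^2$. The single mechanism is the following: at each retained sample $\hat{x}^p$ the optimizer $\mathsf{d}^*_{phy}$ satisfies the corresponding SOP$_{phy}$ constraint with right-hand side $\eta^*_{phy}$; any $x \in X$ lies in some ball $\mathsf{X}^p$, so that $\lVert x - \hat{x}^p \rVert \leq \epsilon^p \leq \epsilon^{\max}$; and Lipschitz continuity then transfers the bound from $\hat{x}^p$ to $x$ at the additive cost $\mathscr{L}\epsilon^{\max}$, which condition \eqref{main condition} absorbs.

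Concretely, I would first treat the decrease condition \eqref{BCeq}. Define $g(x) := \mathcal{B}(q^*\!, f(x)) - \kappa^* \mathcal{B}(q^*\!, x)$, which by hypothesis is Lipschitz with constant $\mathscr{L}^2$ on $X$. Fixing $x \in X$ and choosing $p$ with $x \in \mathsf{X}^p$, adding and subtracting $g(\hat{x}^p)$ yields $g(x) \leq \lvert g(x) - g(\hat{x}^p)\rvert + g(\hat{x}^p) \leq \mathscr{L}^2 \lVert x - \hat{x}^p\rVert + \eta^*_{phy} \leq \mathscr{L}\epsilon^{\max} + \eta^*_{phy} \leq 0$, where the middle step uses constraint \eqref{Eq_SOP4} at $\hat{x}^p$ and the last uses \eqref{main condition}. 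Hence $\mathcal{B}(q^*\!, f(x)) \leq \kappa^* \mathcal{B}(q^*\!, x)$ for every $x \in X$.

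The initial and unsafe conditions follow the same template with $\mathscr{L}^1$. For $x \in X_0$, pick $p$ with $x \in \mathsf{X}^p$ and write $\mathcal{B}(q^*\!, x) - \alpha^* \leq \mathscr{L}^1 \lVert x - \hat{x}^p\rVert + (\mathcal{B}(q^*\!, \hat{x}^p) - \alpha^*) \leq \mathscr{L}^1 \epsilon^{\max} + \eta^*_{phy} \leq 0$ using \eqref{Eq_SOP2}; for $x \in X_u$, the same bound on $-\mathcal{B}(q^*\!,\cdot)+\rho^*$ via \eqref{Eq_SOP3} gives $-\mathcal{B}(q^*\!, x) + \rho^* \leq \mathscr{L}^1 \epsilon^{\max} + \eta^*_{phy} \leq 0$. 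Since $\mathscr{L}^1, \mathscr{L}^2 \leq \mathscr{L}$, all three inequalities \eqref{sys2}--\eqref{BCeq} hold on the entire sets rather than merely at the samples, so $\mathcal{B}(q^*\!, \cdot)$ is a genuine BC for $\Lambda$; invoking Theorem~\ref{Kushner} then yields safety over the infinite horizon with confidence $1$.

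The Lipschitz bookkeeping above is routine, so the real content sits in the hypotheses. The delicate point is that the physics-informed filtering in \eqref{closedata} discards samples, yet one must still guarantee the covering $X \subseteq \cup_{p=1}^{P} \mathsf{X}^p$ of \eqref{New1}: this is where the threshold $\delta$ and the choice of radii $\epsilon^p$ enter, and it is the step most likely to fail if too many samples are removed. A second subtlety is asserting a valid \emph{global} Lipschitz constant $\mathscr{L}^2$ for $g$ over all of $X$ even though $f$ is unknown; this rests on the assumed Lipschitz continuity of $f$ and the continuous differentiability of $\mathcal{B}(q,\cdot)$ on the bounded domain $X$, and I would make explicit that constraint \eqref{Eq_SOP4} is evaluated using the measured one-step successor of $\hat{x}^p$ in place of $f(\hat{x}^p)$.
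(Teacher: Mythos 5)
Your proposal is correct and follows essentially the same route as the paper's own proof: both transfer the sampled SOP$_{phy}$ constraints to all of $X$, $X_0$, $X_u$ via the covering \eqref{New1} and the Lipschitz constants $\mathscr{L}^1$, $\mathscr{L}^2$, absorbing the $\mathscr{L}\epsilon^{\max}$ slack through condition \eqref{main condition}. Your closing caveats (that the filtered samples must still yield a covering of $X$, and that $f(\hat{x}^p)$ is realized by measured one-step successors) are sound observations that the paper handles implicitly in its setup and discussion, but they do not change the argument.
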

	
	\begin{proof}
		We first demonstrate that, under condition \eqref{main condition}, $\mathcal B$ constructed by solving SOP$_{phy}$ in \eqref{Eq_SOP} satisfies \eqref{Eq_ROP4} across the entire state space $X$. Note that according to~\eqref{New1}, for any $x\in X$, there exists $\hat x^p \in \mathsf{X}^p$ such that $x$ and $\hat x^p$ are $\epsilon^p$-close. Given that $\mathcal{B}(q^*\!, f(x)) - \kappa^* \mathcal{B}(q^*\!, x)$ is Lipschitz continuous with respect to $x$ with Lipschitz constant  $\mathscr L^2$, adding and subtracting $\mathcal B(q^{*},f(\hat x^{p})) - \kappa^{*}\mathcal B(q^{*}, \hat x^{p})$ yields
		\begin{align*}
			&\mathcal B(q^{*}, f(x)) - \kappa^{*} \mathcal B(q^{*}, x) =\\
            &\mathcal B(q^{*}, f(x))- \kappa^{*}\mathcal B(q^{*}, x) - \bigr(\mathcal B(q^{*},f(\hat x^{p})) - \kappa^{*}\mathcal B(q^{*}, \hat x^{p})\bigr) \notag \\
			&+ \underbrace{\bigr(\mathcal B(q^{*}, f(\hat x^{p})) - \kappa^{*}\mathcal B(q^{*}, \hat x^p)\bigr)}_{\leq \eta^*_{phy}}\leq \mathscr L^2\Vert x-\hat x^p\Vert+\eta^*_{phy}\\&\leq \mathscr L\Vert x-\hat x^p\Vert+\eta^*_{phy} \overset{\eqref{New1}}{\leq} \underbrace{\mathscr L\epsilon^{\max}+\eta^*_{phy}}_{=\eta_{R}^*}\overset{\eqref{main condition}}{\leq}0.
		\end{align*}
		Thus, by defining $\eta_{R}^* = \mathscr L\epsilon^{\max}+\eta^*_{phy}$, the constructed $\mathcal{B}$ obtained by solving the SOP$_{phy}$ in \eqref{Eq_SOP} satisfies \eqref{Eq_ROP4} over the entire state space $X$. By employing similar reasoning and adding and subtracting $\mathcal B(q^{*}, \hat x^{p})$, one can demonstrate that under condition \eqref{main condition}, the constructed $\mathcal{B}$ resulting from solving the SOP$_{phy}$ in \eqref{Eq_SOP} fulfills conditions ~\eqref{Eq_ROP2} and \eqref{Eq_ROP3} for the sets $X_{0}$ and  $X_{u}$, respectively: 
		\begin{align*}
			&\mathcal B(q^{*}, x) - \alpha^* =  \mathcal B(q^{*}, x) - \mathcal B(q^{*}, \hat x^{p})+ \underbrace{\mathcal B(q^{*}, \hat x^p) - \alpha^* }_{\leq \eta^*_{phy}}\\&
			\leq \mathscr L^1\Vert x-\hat x^p\Vert+\eta^*_{phy}\leq \mathscr L\Vert x-\hat x^p\Vert+\eta^*_{phy} \\&\overset{\eqref{New1}}{\leq} \underbrace{\mathscr L\epsilon^{\max}+\eta^*_{phy}}_{=\eta_{R}^*}\overset{\eqref{main condition}}{\leq}0, ~~\text{and}\\
			&-\mathcal B(q^{*}, x)+ \rho^* =  \mathcal B(q^{*}, \hat x^p)- \mathcal B(q^{*}, x)  \underbrace{- \mathcal B(q^{*}, \hat x^{p})
				+ \rho^* }_{\leq \eta^*_{phy}}\\& \leq \mathscr L^1\Vert x-\hat x^p\Vert+\eta^*_{phy}\leq \mathscr L\Vert x-\hat x^p\Vert+\eta^*_{phy} \\&\overset{\eqref{New1}}{\leq} \underbrace{\mathscr L\epsilon^{\max}+\eta^*_{phy}}_{=\eta_{R}^*}\overset{\eqref{main condition}}{\leq}0. 
		\end{align*}
Furthermore, with $\zeta^*<0$, the condition \eqref{Eq_SOP_Phy_LevelSet} ensures $\rho^*>\alpha^*$. Therefore, $\mathcal B$ obtained by solving the SOP$_{phy}$ in \eqref{Eq_SOP} is a BC for $\Lambda$, thereby completing the proof.
	\end{proof}
	
	To verify condition \eqref{main condition} in Theorem \ref{confidence}, it is necessary to compute $\mathscr{L}^1$ and  $\mathscr{L}^2$. To accomplish this, existing methods in the literature can be employed to compute the Lipschitz constants $\mathscr{L}^1$ and  $\mathscr{L}^2$ from the collected data (\emph{e.g.,} \citealp[Algorithm 1]{knuth2021planning} or \citealp[Algorithm 2]{nejati2023formal}), which leverage the fundamental result of~\citep{wood1996estimation}.

	\section{Discussion} \label{discussion}
	
	In scenario-based approaches, while a larger dataset typically leads to a less negative \(\eta^*_S\) (indicating a more conservative outcome), our approach avoids solving the optimization problem for the entire dataset. Instead, the physics-based model strategically selects specific samples, enabling the problem to be solved in a more efficient manner. Nevertheless, we note that the sample complexity in our work still scales exponentially with the state dimension. A key distinction between our work and existing sampling approaches, such as \citep{aminzadeh2024compositional}, is that increasing $\epsilon^p$ in \eqref{New1} using uniform grid sampling reduces the number of samples but may fail to satisfy the condition in \eqref{main condition} due to the {uniformity of the samples}. In contrast, our physics-informed approach does not require uniform sampling; instead, it guides the sampling process through condition \eqref{closedata}, allowing some samples to be closely clustered while others are more widely spaced. This selective sampling strategy involves ignoring certain points within the state space, focusing only on those that have richness for further analysis, guided by the physical principles of the underlying dynamics. Figure~\ref{fig:gridjump} in the case study section clearly demonstrates this concept.

\section{Case Study} \label{case study}
\captionsetup{justification=raggedright, singlelinecheck=false}

\begin{table*}[t!]
    \centering
    \caption{Comparison of the required data for the proposed physics-informed scenario approach and the traditional approach~\citep{aminzadeh2024compositional}, where “condition” refers to~\eqref{main condition}.}
    \resizebox{\textwidth}{!}{%
    \begin{tabular}{|l|c|c|c|c|c|c|c|c|c|c|}
    \hline
    \textbf{Case Study} & \textbf{Approach} & \textbf{samples} & \textbf{$\delta$} & \textbf{$\epsilon^{\max}$} & \textbf{$\varphi$} & \textbf{$\mathscr{L}$} & \textbf{$\eta$} & \textbf{\% change ($\eta$)} & \textbf{condition} & \textbf{\% change (condition)} \\ \hline
    
    \multirow{2}{*}{\textbf{Supply-Demand}} 
    &\cellcolor{red!20} \textbf{Traditional} 
    &\cellcolor{red!20} 220{,}000 
    &\cellcolor{red!20} - 
    &\cellcolor{red!20} \(5 \times 10^{-6}\) 
    &\cellcolor{red!20} - 
    &\cellcolor{red!20} 0.1308 
    &\cellcolor{red!20} \(-2.3310 \times 10^{-3}\) 
    &\cellcolor{red!20} - 
    &\cellcolor{red!20} \(-2.3303 \times 10^{-3}\) 
    &\cellcolor{red!20} - \\ 
    
    & \cellcolor{green!20} \textbf{Physics-informed} 
    &\cellcolor{green!20} 109{,}971 
    &\cellcolor{green!20} 0.005 
    &\cellcolor{green!20} \(8 \times 10^{-5}\) 
    &\cellcolor{green!20} - 
    &\cellcolor{green!20} 0.1188 
    &\cellcolor{green!20} \(-5.2547 \times 10^{-3}\) 
    &\cellcolor{green!20} \(125\% \text{ more negative}\) 
    &\cellcolor{green!20} \(-5.2452 \times 10^{-3}\) 
    &\cellcolor{green!20} \(125\% \text{ more negative}\) \\ \hline
    
    \multirow{2}{*}{\textbf{Logistic Growth}} 
    &\cellcolor{red!20} \textbf{Traditional} 
    &\cellcolor{red!20} 90{,}000 
    &\cellcolor{red!20} - 
    &\cellcolor{red!20} \(5 \times 10^{-6}\) 
    &\cellcolor{red!20} - 
    &\cellcolor{red!20} 0.2229
    &\cellcolor{red!20} \(-6.4247 \times 10^{-4}\) 
    &\cellcolor{red!20} - 
    &\cellcolor{red!20} \(-6.4136 \times 10^{-4}\) 
    &\cellcolor{red!20} - \\ 
    
    & \cellcolor{green!20} \textbf{Physics-informed} 
    &\cellcolor{green!20} 45{,}138 
    &\cellcolor{green!20} 0.005 
    &\cellcolor{green!20} \(7.5 \times 10^{-5}\) 
    &\cellcolor{green!20} - 
    &\cellcolor{green!20} 0.3415
    &\cellcolor{green!20} \(-2.0617 \times 10^{-3}\) 
    &\cellcolor{green!20} \(221\% \text{ more negative}\) 
    &\cellcolor{green!20} \(-2.0361 \times 10^{-3}\) 
    &\cellcolor{green!20} \(218\% \text{ more negative}\)  \\ \hline

    \multirow{2}{*}{\textbf{Jet Engine}} 
    &\cellcolor{red!20} \textbf{Traditional} 
    &\cellcolor{red!20} 810{,}000 
    &\cellcolor{red!20} - 
    &\cellcolor{red!20} \(7.07 \times 10^{-4}\)
    &\cellcolor{red!20} - 
    &\cellcolor{red!20} 0.2733 
    &\cellcolor{red!20} \(-0.0229\)
    &\cellcolor{red!20} - 
    &\cellcolor{red!20} \(-0.0227\)
    &\cellcolor{red!20} - \\ 
    
    & \cellcolor{green!20} \textbf{Physics-informed} 
    &\cellcolor{green!20} 407{,}911 
    &\cellcolor{green!20} 0.0008 
    &\cellcolor{green!20} \(0.0056\)
    &\cellcolor{green!20} - 
    &\cellcolor{green!20} 0.2737 
    &\cellcolor{green!20} \(-0.0253\)
    &\cellcolor{green!20} \(11\% \text{ more negative}\)
    &\cellcolor{green!20} \(-0.0238\)
    &\cellcolor{green!20} \(5\% \text{ more negative}\) \\ 
    \hline
    
    \end{tabular}%
    }
    \label{tab:case-study}
\end{table*}

\subsection{Supply-Demand} 

Consider the following {physics-informed} supply-demand model
\begin{align*}
	\Lambda^{phy}\!:x(k+1)=x(k)+0.1(5-2x(k)), \quad k\in\mathbb N_0,
\end{align*}
where $x(k)$ represents the price at time step \(k\), and \(x(k+1)\) is the updated price determined by supply and demand dynamics. The ``adjustment factor'' \(0.1\) controls the rate at which the price evolves over time. The term \(5 - 2x(k)\) represents the ``demand-supply balance'', reflecting the desired price level based on demand. 
We are also provided with the data $\{\hat{x}^s\}_{s=1}^S$, over which we aim to perform safety analysis. The regions of interest are given as \( X \in [0.5, 2.7] \), \( X_0 \in [0.5, 0.6] \), and \( X_u \in [2.6, 2.7] \). To design the BC, we consider its structure as \( \mathcal{B}(q,x) = q_1 x^2 + q_2 x + q_3 \). We are given a sample size of $S= 220,000$. We choose \( \delta = 0.005 \) to construct the {physics-informed} data set as specified in \eqref{closedata}, resulting in \( P = 109,971 \). By setting \( \kappa^*= 0.83 \), we solve the $\text{SOP}_{{phy}}$ and compute the BC coefficients along with the other decision variables as
\begin{align*} 
&\mathcal{B}(q,x) = 0.2 x^2 + 0.1054 x -1.6155,\\& \alpha^* \!=\! 0.0001, ~\rho^* \!=\! 0.0054, ~\eta_{phy}^* \!\!=\! -0.0053.
\end{align*}
We compute \( \epsilon^{\max} = 8 \times 10^{-5} \) based on the physics-informed sampling and are given \( \mathscr{L} = 0.1188 \). Given that \(\mathscr{L} \epsilon^{\max} + \eta_{{phy}}^* = -0.0052 \leq 0 \), as required by Theorem \ref{confidence}, we conclude that the {physics-informed} data-driven BC is valid for the unknown system \( \Lambda \) across the entire state space, guaranteeing its correctness. The {physics-informed} data-driven BC is depicted in Fig.~\ref{fig:linear} (a).

To demonstrate the effectiveness of our proposed method in mitigating data requirements, we compare it with traditional scenario-based approach from the literature that retain all samples for analysis \citep[\emph{e.g.,}][]{aminzadeh2024compositional}. As shown in Table~\ref{tab:case-study}, while the traditional scenario-based approach requires $220,000$ data points to solve this problem, our physics-informed approach achieves the solution with only $109,971$ data points. Notably, despite requiring fewer data points, the main condition \eqref{main condition} in our approach is even $125\%$ \emph{more negative} than that of the traditional approach by \cite{aminzadeh2024compositional}, demonstrating the effectiveness of our method. The physics-informed sampling strategy in our approach according to~\eqref{closedata} is partially depicted in Fig.~\ref{fig:gridjump} (a), focusing on the region where the {maximum jump} in sampling occurs.

\begin{figure}[t!]
\begin{center}	\includegraphics[width=5.6cm]{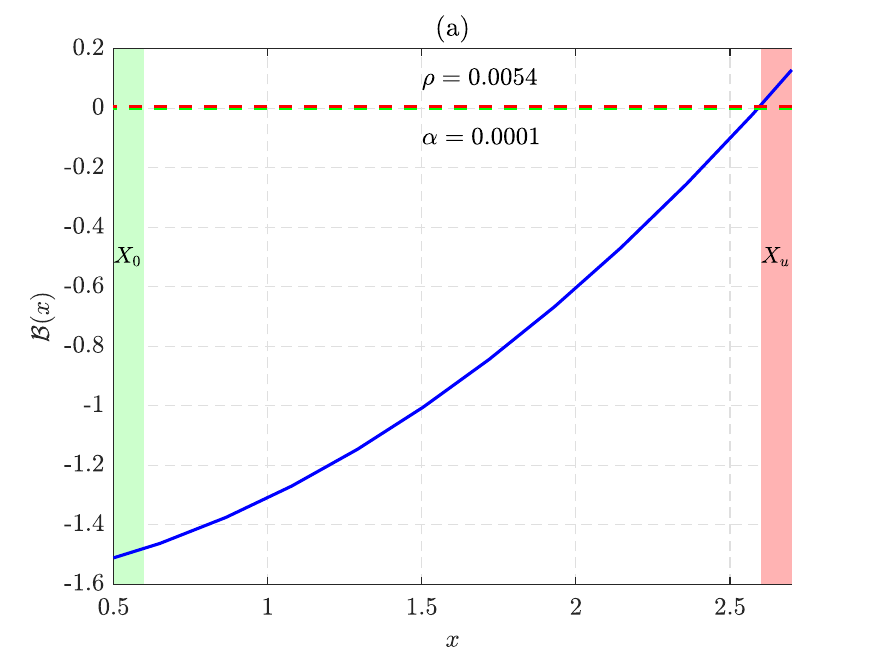}\\\includegraphics[width=5.6cm]{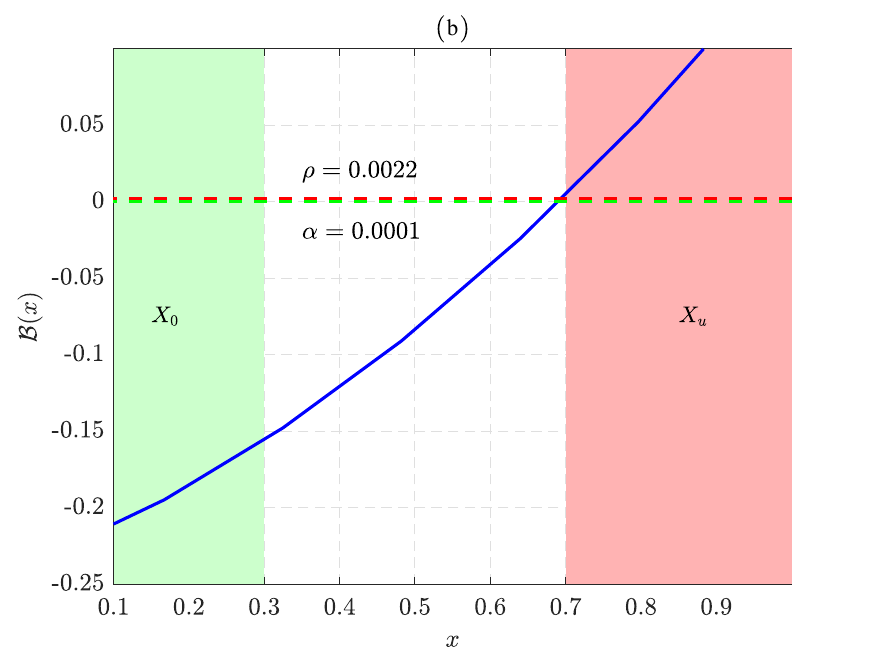}\\\includegraphics[width=5.6cm]{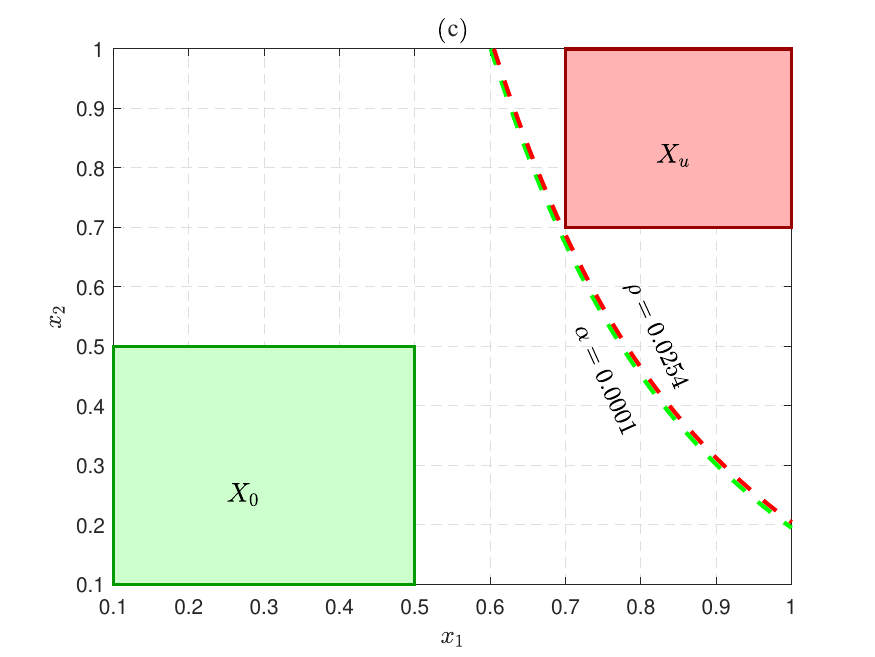}
\caption{{Physics-informed} BC for (a) supply–demand, (b) logistic-growth, and (c) jet engine case studies. The quadratic BC segment is shown in blue, satisfying \eqref{sys2} and \eqref{sys3}; green and red dashed lines denote the initial and unsafe level sets. The green and red regions denote the initial and unsafe sets.}
  \label{fig:linear}
  \end{center}
\end{figure}
\begin{figure}[t!]
	\begin{center}
		\includegraphics[width=7.5cm]{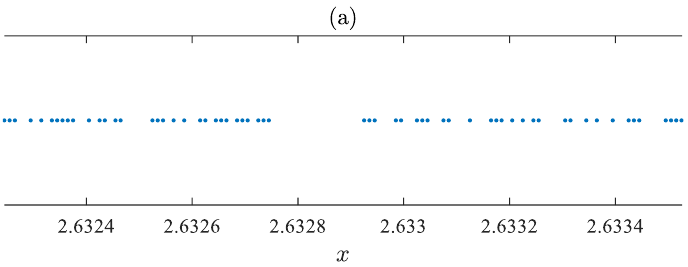}\vspace{0.1cm}
		\includegraphics[width=7.5cm]{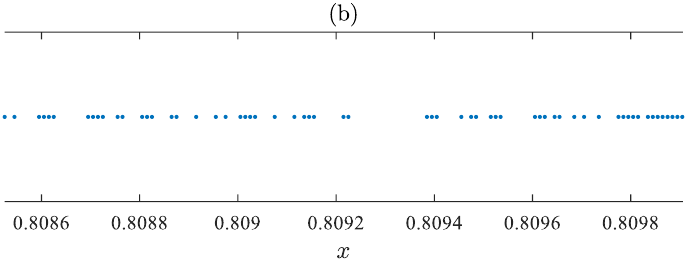}
		\caption{{Physics-informed} sampling strategy from~\eqref{closedata} for (a) supply–demand and (b) logistic-growth examples, highlighting the region of maximal sampling jump.}
		\label{fig:gridjump}
	\end{center}
\end{figure}

\subsection{Logistic Growth}
Consider the following physics-informed \emph{nonlinear} logistic growth model, adapted from \citep{anokye2024logistic},
\begin{align*} 
	\Lambda^{phy}\!:x(k+1) \!=\! x(k)\!+\!0.5x(k)(1\!-\!x(k))\!-\!0.2x(k), ~ k\!\in\!\mathbb N_0,
\end{align*}
where \( x(k) \) denotes the population at time step \( k \), the term \( 0.5x(k)(1 - x(k)) \) models logistic growth with a carrying capacity, and the term \( -0.2x(k) \) represents a linear damping (or decay) effect, which reduces the population growth rate over time. We are also provided with the data $\{\hat{x}^s\}_{s=1}^S$ to perform safety analysis. The regions of interest are as follows: \( X \in [0.1, 1] \), \( X_0 \in [0.1, 0.3] \), and \( X_u \in [0.7, 1] \). We are given a sample size of $S= 90,000$. We select \( \delta = 0.005 \) to construct the {physics-informed} data set in \eqref{closedata}, resulting in $P = 45,138$. By setting \( \kappa^* = 0.83 \), we solve the $\text{SOP}_{{phy}}$ and compute the BC coefficients along with the other decision variables as 
\begin{align*} 
&\mathcal{B}(q,x) = 0.2 x^2 + 0.2 x -0.2338,\\& \alpha^* \!=\! 0.0001, \rho^* \!=\! 0.0022, \eta_{phy}^* \!\!=\! -0.0021.
\end{align*}
We compute \( \epsilon^{\max} = 7.5 \times 10^{-5} \) and  are given \( \mathscr{L} = 0.3415 \). Since \( \mathscr{L} \epsilon^{\max} + \eta_{{phy}}^* = -0.0020 \leq 0 \), we conclude that the {physics-informed} data-driven BC is valid for the unknown system \( \Lambda \) across the entire state space with a guarantee of correctness. The {physics-informed} BC derived from the data is depicted in Fig.~\ref{fig:linear} (b). We showcase the data efficiency of our approach compared to the traditional scenario-based approach from the literature \citep[\emph{e.g.,}][]{aminzadeh2024compositional} in Table~\ref{tab:case-study}. Fig.~\ref{fig:gridjump} (b) illustrates our selective physics-informed sampling strategy, highlighting the area where the {maximum jump} in sampling happens.

\subsection{Jet Engine}
Consider the following physics-informed \emph{nonlinear} Moore-Greitzer jet engine model \citep{krstic1995lean}:
\begin{align*} 
\Lambda^{phy}\!\!:
\begin{cases}
\!{x}_1(k+1) = x_1(k)\!+\!\tau(\!-x_2(k)\! -\!\! 1.5 x_1^2(k)\! -\! 0.5 x_1^3(k)),\\ 
\!{x}_2(k+1) = x_2(k)+\tau x_1(k),
\end{cases}
\end{align*}
where \(x_1 = \Phi-1\) and \(x_2 = \Psi - \Theta-2\). The quantities \(\Phi\), \(\Psi\), and \(\Theta\) denote the mass flow, pressure rise, and a constant parameter, respectively, and \(\tau\) is the  sampling time. The regions of interest are $X = [0.1,1]^2, X_0 = [0.1, 0.5]^2, X_u = [0.7, 1]^2.$ We choose the BC structure 
$\mathcal{B}(q,x) = q_1 x_1^2 + q_2 x_2^2 + q_3 x_1 + q_4 x_2 + q_5 x_1x_2 + q_6$.
Given $S = 810{,}000$ samples and selecting $\delta = 0.0008$ in~\eqref{closedata}, we obtain $P = 407{,}911$ physics-informed samples.  
With $\kappa^* = 0.95$, solving $\mathrm{SOP}_{\mathrm{phy}}$ yields
\begin{align*} &\mathcal{B}(q,x)\! =\! 0.5 x_1^2\! + \!0.5 x_2^2 \!+\!0.5 x_1 \!-\!2.4726x_2\!+\!5x_1x_2\!-\!1.5117,\\& \alpha^* \!=\! 0.0001, \rho^* \!=\! 0.0254, \eta_{phy}^* \!\!=\! -0.0253. \end{align*}
We compute $\epsilon^{\max} = 0.0056$ and are given $\mathscr{L} = 0.2737$. Since
$\mathscr{L}\epsilon^{\max} + \eta_{\mathrm{phy}}^* = -0.0238 \le 0$,
Theorem~\ref{confidence} guarantees that the physics-informed BC is valid for the unknown system~$\Lambda$ over the entire state space. 
The resulting BC appears in Fig.~\ref{fig:linear} (c).  
Table~\ref{tab:case-study} highlights the data efficiency of our approach compared to traditional scenario-based methods \citep[\emph{e.g.,}][]{aminzadeh2024compositional}.

\section{Conclusion}\label{conclusion}
In this paper, we developed a {physics-informed} scenario approach for verifying the safety of nonlinear systems via BCs. Traditional scenario methods, though effective, often require large datasets due to their {exponential} sample complexity. Our physics-informed approach addressed this by selecting data samples whose outputs closely match those of a physics-based model, reducing redundancy and  shrinking the required dataset. Three case studies  demonstrated the ability of our approach to achieve safety guarantees with fewer samples. Extending this approach to controller synthesis is a direction for future work.

\bibliography{ifacconf}

\begin{thebibliography}{27}
\providecommand{\natexlab}[1]{#1}
\providecommand{\url}[1]{\texttt{#1}}
\providecommand{\urlprefix}{URL }
\expandafter\ifx\csname urlstyle\endcsname\relax
  \providecommand{\doi}[1]{doi:\discretionary{}{}{}#1}\else
  \providecommand{\doi}{doi:\discretionary{}{}{}\begingroup
  \urlstyle{rm}\Url}\fi

\bibitem[{Ames et~al.(2019)Ames, Coogan, Egerstedt, Notomista, Sreenath, and
  Tabuada}]{ames2019control}
Ames, A.D., Coogan, S., Egerstedt, M., Notomista, G., Sreenath, K., and
  Tabuada, P. (2019).
\newblock Control barrier functions: Theory and applications.
\newblock In \emph{18th European control conference (ECC)}, 3420--3431.

\bibitem[{Aminzadeh et~al.(2024)Aminzadeh, Swikir, Haddadin, and
  Lavaei}]{aminzadeh2024compositional}
Aminzadeh, A., Swikir, A., Haddadin, S., and Lavaei, A. (2024).
\newblock Compositional safety verification of infinite networks: A data-driven
  approach.
\newblock In \emph{European Control Conference (ECC)}, 545--551.

\bibitem[{Anokye(2024)}]{anokye2024logistic}
Anokye, M. (2024).
\newblock A logistic growth model with discrete-time delay and a restriction on
  harvesting.
\newblock \emph{Journal of Mathematics}, 2024(1).

\bibitem[{Banse et~al.(2024)Banse, Romao, Abate, and Jungers}]{banse2024data}
Banse, A., Romao, L., Abate, A., and Jungers, R. (2024).
\newblock Data-driven memory-dependent abstractions of dynamical systems via a
  {C}antor-{K}antorovich metric.
\newblock \emph{arXiv:2405.08353}.

\bibitem[{Calafiore and Campi(2006)}]{calafiore2006scenario}
Calafiore, G.C. and Campi, M.C. (2006).
\newblock The scenario approach to robust control design.
\newblock \emph{IEEE Transactions on automatic control}, 51(5), 742--753.

\bibitem[{Campi et~al.(2009)Campi, Garatti, and Prandini}]{campi2009scenario}
Campi, M.C., Garatti, S., and Prandini, M. (2009).
\newblock The scenario approach for systems and control design.
\newblock \emph{Annual Reviews in Control}, 33(2), 149--157.

\bibitem[{Coppola et~al.(2022)Coppola, Peruffo, and Mazo~Jr}]{coppola2022data}
Coppola, R., Peruffo, A., and Mazo~Jr, M. (2022).
\newblock Data-driven abstractions for verification of deterministic systems.
\newblock \emph{arXiv:2211.01793}.

\bibitem[{Huang and Wang(2022)}]{huang2022applications}
Huang, B. and Wang, J. (2022).
\newblock Applications of physics-informed neural networks in power systems-a
  review.
\newblock \emph{IEEE Transactions on Power Systems}, 38(1), 572--588.

\bibitem[{Knuth et~al.(2021)Knuth, Chou, Ozay, and
  Berenson}]{knuth2021planning}
Knuth, C., Chou, G., Ozay, N., and Berenson, D. (2021).
\newblock Planning with learned dynamics: Probabilistic guarantees on safety
  and reachability via lipschitz constants.
\newblock \emph{IEEE Robotics and Automation Letters}, 6(3), 5129--5136.

\bibitem[{Krstic and Kokotovic(1995)}]{krstic1995lean}
Krstic, M. and Kokotovic, P.V. (1995).
\newblock Lean backstepping design for a jet engine compressor model.
\newblock In \emph{Proceedings of International Conference on Control
  Applications}, 1047--1052. IEEE.

\bibitem[{Lavaei and Frazzoli(2022)}]{lavaei2022compositional}
Lavaei, A. and Frazzoli, E. (2022).
\newblock Compositional controller synthesis for interconnected stochastic
  systems with {M}arkovian switching.
\newblock In \emph{2022 American Control Conference (ACC)}, 4838--4843.

\bibitem[{Lavaei and Frazzoli(2024)}]{lavaeiTAC2024}
Lavaei, A. and Frazzoli, E. (2024).
\newblock Scalable synthesis of safety barrier certificates for networks of
  stochastic switched systems.
\newblock \emph{IEEE Transactions on Automatic Control}, 69(11), 7294--7309.

\bibitem[{Liu et~al.(2025)Liu, Fitzsimmons, Zhou, and Meng}]{liu2025formally}
Liu, J., Fitzsimmons, M., Zhou, R., and Meng, Y. (2025).
\newblock Formally verified physics-informed neural control {L}yapunov
  functions.
\newblock In \emph{2025 American Control Conference (ACC)}, 1347--1354. IEEE.

\bibitem[{Makdesi et~al.(2021)Makdesi, Girard, and
  Fribourg}]{makdesi2021efficient}
Makdesi, A., Girard, A., and Fribourg, L. (2021).
\newblock Efficient data-driven abstraction of monotone systems with
  disturbances.
\newblock \emph{IFAC-PapersOnLine}, 54(5), 49--54.

\bibitem[{Margellos et~al.(2014)Margellos, Goulart, and
  Lygeros}]{margellos2014road}
Margellos, K., Goulart, P., and Lygeros, J. (2014).
\newblock On the road between robust optimization and the scenario approach for
  chance constrained optimization problems.
\newblock \emph{IEEE Transactions on Automatic Control}, 59(8), 2258--2263.

\bibitem[{McGregor et~al.(2017)McGregor, Gluch, and
  Feiler}]{mcgregor2017analysis}
McGregor, J.D., Gluch, D.P., and Feiler, P.H. (2017).
\newblock Analysis and design of safety-critical, cyber-physical systems.
\newblock \emph{ACM SIGAda Ada Letters}, 36(2), 31--38.

\bibitem[{Nejati et~al.(2023)Nejati, Lavaei, Jagtap, Soudjani, and
  Zamani}]{nejati2023formal}
Nejati, A., Lavaei, A., Jagtap, P., Soudjani, S., and Zamani, M. (2023).
\newblock Formal verification of unknown discrete-and continuous-time systems:
  A data-driven approach.
\newblock \emph{IEEE Transactions on Automatic Control}, 68(5), 3011--3024.

\bibitem[{Nejati et~al.(2024)Nejati, Prakash~Nayak, and
  Schmuck}]{nejati2024context}
Nejati, A., Prakash~Nayak, S., and Schmuck, A.K. (2024).
\newblock Context-triggered games for reactive synthesis over stochastic
  systems via control barrier certificates.
\newblock In \emph{Proceedings of the 27th ACM International Conference on
  Hybrid Systems: Computation and Control}, 1--12.

\bibitem[{Nejati and Zamani(2023)}]{nejati2023data}
Nejati, A. and Zamani, M. (2023).
\newblock Data-driven synthesis of safety controllers via multiple control
  barrier certificates.
\newblock \emph{IEEE Control Systems Letters}.

\bibitem[{Niknejad and Modares(2023)}]{niknejad2023physics}
Niknejad, N. and Modares, H. (2023).
\newblock Physics-informed data-driven safe and optimal control design.
\newblock \emph{IEEE Control Systems Letters}, 8, 285--290.

\bibitem[{Prajna and Jadbabaie(2004)}]{prajna2004safety}
Prajna, S. and Jadbabaie, A. (2004).
\newblock Safety verification of hybrid systems using barrier certificates.
\newblock In \emph{International Workshop on Hybrid Systems: Computation and
  Control}, 477--492.

\bibitem[{Samari et~al.(2026)Samari, Nejati, and Lavaei}]{samari2026data}
Samari, B., Nejati, A., and Lavaei, A. (2026).
\newblock Data-driven control of large-scale networks with formal guarantees: A
  small-gain-free approach.
\newblock \emph{IEEE Transactions on Automatic Control}.

\bibitem[{Wieland and Allg{\"o}wer(2007)}]{wieland2007constructive}
Wieland, P. and Allg{\"o}wer, F. (2007).
\newblock Constructive safety using control barrier functions.
\newblock \emph{IFAC Proceedings Volumes}, 40(12), 462--467.

\bibitem[{Wood and Zhang(1996)}]{wood1996estimation}
Wood, G.R. and Zhang, B.P. (1996).
\newblock Estimation of the {L}ipschitz constant of a function.
\newblock \emph{Journal of Global Optimization}, 8, 91--103.

\bibitem[{Zaker et~al.(2026{\natexlab{a}})Zaker, Akbarzadeh, Samari, and
  Lavaei}]{zaker2024compositional}
Zaker, M., Akbarzadeh, O., Samari, B., and Lavaei, A. (2026{\natexlab{a}}).
\newblock Compositional design of safety controllers for large-scale stochastic
  hybrid systems.
\newblock \emph{Automatica}.

\bibitem[{Zaker et~al.(2026{\natexlab{b}})Zaker, Nejati, and
  Lavaei}]{zaker2025data}
Zaker, M., Nejati, A., and Lavaei, A. (2026{\natexlab{b}}).
\newblock Data-driven safety certificates of infinite networks with unknown
  models and interconnection topologies.
\newblock \emph{Automatica}.

\bibitem[{Zhang et~al.(2022)Zhang, Yin, and Sheil}]{zhang2022physics}
Zhang, P., Yin, Z.Y., and Sheil, B. (2022).
\newblock A physics-informed data-driven approach for consolidation analysis.
\newblock \emph{G{\'e}otechnique}, 74(7), 620--631.

\end{thebibliography}
                                                                        
\end{document}